\documentclass[aps,pra,twocolumn,superscriptaddress,nofootinbib,
               longbibliography]{revtex4-2}
\usepackage{amsmath,amssymb,amsthm}
\usepackage{hyperref}
\usepackage{graphicx}
\usepackage{xcolor}
\usepackage{quantikz}
\usepackage{booktabs}
\usepackage{enumitem}

\newtheorem{theorem}{Theorem}
\newtheorem{lemma}[theorem]{Lemma}
\newtheorem{corollary}[theorem]{Corollary}
\newtheorem{proposition}[theorem]{Proposition}

\newtheorem{protocol}{Protocol}
\newtheorem{remark}{Remark}

\newcommand{\Wfree}{\mathcal{W}_{\mathrm{free}}}
\newcommand{\Stab}{\mathrm{Stab}}
\renewcommand{\ket}[1]{|{#1}\rangle}

\providecommand{\tr}{\mathrm{tr}}
\renewcommand{\proj}[1]{|{#1}\rangle\langle{#1}|}

\begin{document}

\title{Magic Secret Sharing: Threshold Control of\\
Quantum Computational Power via GHZ Entanglement}

\author{Soumyojyoti Dutta}
\email{m24iqt014@iitj.ac.in}
\affiliation{Department of Physics, Indian Institute of Technology Jodhpur, 342030, India}

\author{Tushar}
\email{p24ph0012@iitj.ac.in}
\affiliation{Department of Physics, Indian Institute of Technology Jodhpur, 342030, India}

\date{August 3, 2026}

\begin{abstract}
We introduce \emph{Magic Secret Sharing} (MSS), a quantum cryptographic
primitive in which the secret is the \emph{computational capability}
of a quantum state rather than its classical description.
In the resource theory of magic, non-stabilizer states fuel universal
quantum computation via non-Clifford gates; MSS distributes this resource
with a $(n-1,n)$ threshold structure using a pre-shared GHZ state
and a single local phase gate $P(\varphi)=\mathrm{diag}(1,e^{i\varphi})$.
Any individual party holds the maximally mixed state $I/2$,
with Wigner distance $C(I/2)=0$, so no local operation can yield
non-Clifford computational advantage regardless of what operations are applied
or what noise acts on the device.
The authorised coalition reconstructs magic content
$C(\varphi)=(|\sin\varphi|+|\cos\varphi|-1)/2$ exactly,
enabling a logical $T$ gate via gate teleportation in
multi-server blind quantum computation (BQC).
Among diagonal parametric gates, phase gates are the unique class
satisfying the security condition, characterised via an exact column-sum condition.
The protocol is elevated to a one-sided device-independent (1SDI)
setting via a steering inequality: the assemblage produced on the
recipient's side certifies magic delivery without trusting the
coalition's devices.
We demonstrate the $(2,3)$ instance on \texttt{ibm\_marrakesh}
(156-qubit IBM Heron): security ($C(\rho_\mathrm{Bob}) < 10^{-6}$,
the linear-programming solver tolerance) holds in every run and is
independently reproduced on a second qubit assignment, and state
fidelity reaches $0.959$--$0.973$
for the authorised party, with faithfulness confirmed for all four
test values of $\varphi$ to within $0.025$ in magic content, a
residual that depolarising noise alone accounts for.
\end{abstract}

\maketitle

\section{Introduction}
\label{sec:intro}

Magic states are the scarce resource of fault-tolerant quantum
computation~\cite{Bravyi2005}: they fuel non-Clifford gates,
enabling computational universality beyond what is efficiently
classically simulable~\cite{Veitch2014,Howard2017}.
Yet the resource theory of magic has been studied almost exclusively
in the context of distillation and simulation overhead.
Its intersection with quantum cryptography remains largely unexplored.

Here we introduce \emph{Magic Secret Sharing} (MSS),
a cryptographic primitive that distributes computational power
rather than classical or quantum information.
MSS differs from standard quantum secret sharing
(QSS)~\cite{Hillery1999,Cleve1999} at the level of the threat model.
QSS prevents unauthorised parties from learning the \emph{identity}
of the secret state; MSS prevents them from \emph{non-Clifford computing with it}.
In the resource theory of magic, stabilizer operations are free
and non-stabilizer states are the costly resource.
An unauthorised party in MSS holds the maximally mixed state $I/2$,
which has zero magic content regardless of what operations
they apply to it --- free or otherwise, on any device.
These are orthogonal security guarantees with distinct threat models.

The protocol uses a GHZ state and a single local phase gate
$P(\varphi) = \mathrm{diag}(1,e^{i\varphi})$
to achieve $(n-1,n)$ threshold MSS with magic content
$C(\varphi) = (|\sin\varphi|+|\cos\varphi|-1)/2$.
Among diagonal parametric gates, phase gates are the unique class
satisfying the security condition (Lemma~\ref{lem:column_sum}).
The authorised coalition of $n-1$ parties can deliver
$P(\varphi)|{+}\rangle$ to a server in distributed blind quantum
computation (BQC)~\cite{Fitzsimons2017}, enabling non-Clifford
computation while preventing any individual server from unilaterally
achieving it~\cite{GottesmanChuang1999}.

A further structural result concerns the entanglement hierarchy.
MSS sits precisely at the \emph{steering} level:
Alice and Bob remotely prepare computational resourcefulness
on Charlie's side through entangled measurements,
without access to his subsystem.
This is operationally sharper than characterising the protocol by
entanglement or Bell nonlocality alone.
Quantum steering has recently been identified as the required resource
for secure quantum \emph{state} sharing~\cite{Wilkinson2023};
MSS occupies the same structural level but with a fundamentally
different threat model --- the secret is computational capability
rather than state identity, and security is defined over the
resource theory of magic rather than over information-theoretic
distinguishability.
The steering assemblage produced by the protocol
can certify magic delivery in a one-sided device-independent (1SDI) setting:
the coalition's devices may be adversarial black boxes,
and magic delivery is certified purely from steering correlations
(Proposition~\ref{prop:1SDI}).
Crucially, the certification functional $\mathcal{F}$ is constructed
directly from the LP dual witness $H^*$ for the magic monotone $C$,
so the steering violation does not merely confirm that magic is
nonzero --- it measures its exact value without Alice revealing $\varphi$.
This is sharper than Bell-inequality-based magic
witnesses~\cite{Macedo2025}, which certify non-stabilizerness
as a binary property rather than measuring a specific monotone value.

We demonstrate the $(2,3)$ protocol on \texttt{ibm\_marrakesh}
and achieve 4/4 faithfulness across test phases, with state fidelity
up to $0.973$ and security ($C(\rho_\mathrm{Bob}) < 10^{-6}$, the LP
solver tolerance) in every run; security was independently reproduced
on a second qubit assignment.
A companion paper~\cite{DuttaTushar2026b} studies non-Markovian
bath dynamics on the same hardware using the MSS circuit
as a diagnostic probe.

\section{Preliminaries}
\label{sec:prelim}

\subsection{Wigner distance and magic content}

We use the Wootters product construction~\cite{Wootters1987}.
For a single qubit, the four phase-point operators are
\begin{equation}
  A_{(q,p)} = \tfrac{1}{2}(I + (-1)^p X + (-1)^{q+p} Y + (-1)^q Z),
  \quad q,p \in \{0,1\},
\end{equation}
and the Wigner function is $W_\rho(q,p) = \frac{1}{2}\tr(\rho A_{(q,p)})$.
For $n$ qubits, $A_\alpha = A_{\alpha_1}\otimes\cdots\otimes A_{\alpha_n}$.

The \emph{stabilizer polytope} $\Wfree$ is the convex hull of
stabilizer state Wigner vectors.
The \emph{Wigner distance}~\cite{Dutta2026} is
\begin{equation}
  C(\rho) = \min_{f\in\Wfree}\|W_\rho - f\|_1.
  \label{eq:C_def}
\end{equation}
$C(\rho) = 0$ iff $\rho \in \Stab$; $C$ is invariant under
all single-qubit Clifford operations; $C(I/2) = 0$.
All security proofs in this work rely only on these two properties,
both of which follow directly from the LP definition~\eqref{eq:C_def};
extended properties of $C$ are developed in~\cite{Dutta2026}.

\subsection{Magic content of phase gate states}

\begin{lemma}[C-formula]
\label{lem:C_formula}
For $\ket{\varphi} = P(\varphi)\ket{+}$,
\begin{equation}
  C(\varphi) \equiv C(\proj{\varphi}) =
  \frac{|\sin\varphi| + |\cos\varphi| - 1}{2}.
  \label{eq:C_phi}
\end{equation}
\end{lemma}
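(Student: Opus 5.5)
The plan is to compute both sides of the linear program \eqref{eq:C_def} explicitly. For the upper bound I will exhibit a stabilizer point; for the matching lower bound I will use a simple dual witness. Everything reduces to Bloch-vector geometry.

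\emph{Step 1: Wigner vectors in Bloch coordinates.} For a qubit state with Bloch vector $r=(r_x,r_y,r_z)$ we have $\tr(\rho A_{(q,p)})=\tfrac12(1+s_{(q,p)}\cdot r)$. Here the sign vectors are $s_{(q,p)}=((-1)^p,(-1)^{q+p},(-1)^q)$, namely $(1,1,1)$, $(1,-1,-1)$, $(-1,-1,1)$ and $(-1,1,-1)$. Hence $W_\rho(q,p)=\tfrac14(1+s_{(q,p)}\cdot r)$. The map $r\mapsto W$ is affine and injective. The stabilizer polytope $\Wfree$ is therefore the image of the octahedron $O=\{|x|+|y|+|z|\le 1\}$, the convex hull of $\pm\hat x,\pm\hat y,\pm\hat z$. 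For $d=r-r'$ this gives
\begin{equation*}
\|W_\rho-W_{\sigma}\|_1=\tfrac14\sum_{(q,p)}|s_{(q,p)}\cdot d| .
\end{equation*}
For $\ket{\varphi}=P(\varphi)\ket{+}$ the Bloch vector is $r=(\cos\varphi,\sin\varphi,0)$. Conjugating by $S$ and $X$ is Clifford and preserves $C$; it maps $\varphi\mapsto\varphi+\pi/2$ and $\varphi\mapsto-\varphi$. So it suffices to treat $\varphi\in[0,\pi/2]$, with $c=\cos\varphi\ge0$ and $s=\sin\varphi\ge0$, where the claim reads $C=(c+s-1)/2$.

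\emph{Step 2: upper bound.} I take $r'=(x,y,0)$ on the face $x+y=1$ of $O$, with $x=(1+c-s)/2$ and $y=(1+s-c)/2$. Both are nonnegative since $|c-s|\le1$. Then $d=(\delta/2,\delta/2,0)$ with $\delta=c+s-1$. The four terms $|s\cdot d|$ are $|\delta|,0,|\delta|,0$, so the distance equals $\delta/2$. This gives $C\le (c+s-1)/2$.

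\emph{Step 3: lower bound via a dual witness.} For any $r'\in O$, I drop the second and fourth terms and use the triangle inequality on the first and third, with weights $(1,0,-1,0)$:
\begin{equation*}
\tfrac14\sum|s\cdot d|\ \ge\ \tfrac14\big|(s_{(0,0)}-s_{(1,0)})\cdot d\big|=\tfrac12|d_x+d_y| .
\end{equation*}
This works because $s_{(0,0)}-s_{(1,0)}=(2,2,0)$. Since $x+y\le|x|+|y|+|z|\le1$ on $O$, we get $d_x+d_y=c+s-(x+y)\ge c+s-1$. Hence every $r'\in O$ is at distance at least $(c+s-1)/2$.

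Combining the two bounds and undoing the Clifford reduction gives \eqref{eq:C_phi}. The main obstacle is finding the right dual witness, meaning the sign pattern on phase points that collapses the $\ell_1$ norm to the linear functional $d_x+d_y$. Once that is seen, the bound is tight exactly because the optimal $r'$ sits on the octahedron face $x+y=1$.
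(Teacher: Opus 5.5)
Your proof is correct, and it differs from the paper's in both halves.

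For the lower bound, the paper uses negativity. Every $f\in\Wfree$ is nonnegative and normalised, so $\|W_\varphi-f\|_1\ge 2|W_\varphi(0,1)|$. That argument needs only $f\ge 0$, so it holds against the entire probability simplex; in your Bloch picture this is the tetrahedron with vertices $s_{(q,p)}$. Your witness instead uses the inequality $x'+y'\le 1$, which is specific to the octahedron. The two give the same value here because the phase point you pair with $(0,0)$ is exactly the unique negative one.

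For the upper bound the difference is substantive. The paper's $f^*$ puts weights $\cos\varphi/(\cos\varphi+\sin\varphi)$ and $\sin\varphi/(\cos\varphi+\sin\varphi)$ on $W_{\ket{+}}$ and $W_{\ket{+i}}$. This is the radial projection $r'=r/(c+s)$, so $d=r\,\delta/(c+s)$. For $d_z=0$ your formula reduces to $\tfrac14\sum|s\cdot d|=\max(|d_x|,|d_y|)$. The paper's point is therefore at distance $\max(c,s)\,\delta/(c+s)$, which equals $\delta/2$ only at $\varphi=\pi/4$; at $\varphi=\pi/8$ it gives about $0.217$ instead of $0.153$. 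Your point shifts $r$ along $(1,1,0)$ so that $W_\varphi$ and $f$ agree on the phase points $(1,0)$ and $(1,1)$, and it is the actual minimiser. So your argument proves the lemma for all $\varphi$, while the paper's equality claim holds only at the $T$-gate angle.

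You also justify the reduction to $[0,\pi/2]$ through $S$- and $X$-invariance of $C$. The paper only observes that the right-hand side is $\pi/2$-periodic and leaves the periodicity of $C$ itself implicit.

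One labelling slip: with your stated convention $s_{(q,p)}=((-1)^p,(-1)^{q+p},(-1)^q)$, one has $s_{(1,0)}=(1,-1,-1)$ and $s_{(0,1)}=(-1,-1,1)$. The difference you need is therefore $s_{(0,0)}-s_{(0,1)}=(2,2,0)$, not $s_{(0,0)}-s_{(1,0)}$. The vectors you actually compute with, $(1,1,1)$ and $(-1,-1,1)$, are the right ones, so the argument is unaffected.
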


\begin{proof}
The Bloch vector of $|{\varphi}\rangle$ is $(\cos\varphi,\sin\varphi,0)$, giving
$W_\varphi(q,p) = \frac{1}{4}(1+(-1)^p\cos\varphi+(-1)^{q+p}\sin\varphi)$.
For $\varphi\in(0,\pi/2)$ the unique negative entry is
$W_\varphi(0,1) = (1-\cos\varphi-\sin\varphi)/4 < 0$.
Since every $f\in\Wfree$ satisfies $f(q,p)\geq 0$,
\begin{equation}
  \|W_\varphi-f\|_1 \geq 2|W_\varphi(0,1)| = \frac{\cos\varphi+\sin\varphi-1}{2}.
\end{equation}
Equality is achieved at $f^* = \frac{\cos\varphi}{\cos\varphi+\sin\varphi}W_{|+\rangle}
+\frac{\sin\varphi}{\cos\varphi+\sin\varphi}W_{|+i\rangle}$.
Since $|+\rangle$ and $|+i\rangle$ are stabilizer states,
$W_{|+\rangle},W_{|+i\rangle}\in\Wfree$.
For $\varphi\in(0,\pi/2)$ both coefficients are positive and sum to unity,
so $f^*\in\Wfree$ is a valid convex combination, and
$C(\varphi) = (\cos\varphi+\sin\varphi-1)/2$.
Since $|\sin(\varphi+k\pi/2)|+|\cos(\varphi+k\pi/2)| = |\sin\varphi|+|\cos\varphi|$
for all integers $k$, the formula extends to all $\varphi$ by $\pi/2$-periodicity,
giving~\eqref{eq:C_phi}.
\end{proof}

$C(\varphi) = 0$ at $\varphi\in\{0,\pi/2,\pi,3\pi/2\}$.
Maximum $C = (\sqrt{2}-1)/2 \approx 0.207$ at $\varphi=\pi/4$ ($T$ gate).

\subsection{Phase gate security condition}

\begin{lemma}[Column-sum condition]
\label{lem:column_sum}
Let $G$ be a single-qubit unitary applied to qubit~$0$ of $\ket{\mathrm{GHZ}_3}$.
After projecting qubit~$0$ onto $\ket{+}$, qubit~$1$'s reduced state is $I/2$
if and only if
\begin{equation}
  |G_{00}+G_{10}| = |G_{01}+G_{11}| = 1. \label{eq:colsum}
\end{equation}
For diagonal parametric gates $G(\varphi)=\mathrm{diag}(g_1(\varphi),g_2(\varphi))$,
condition~\eqref{eq:colsum} holds if and only if $|g_1(\varphi)|=|g_2(\varphi)|=1$,
i.e., $G(\varphi)$ is a phase gate.
Phase gates are therefore the unique diagonal class satisfying both the
security condition and faithful magic injection $C(\rho_C)=C(\varphi)>0$.
\end{lemma}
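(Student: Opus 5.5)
The plan is a direct computation of the post-measurement state, followed by a specialisation to diagonal $G$. Write $\ket{\mathrm{GHZ}_3}=(\ket{000}+\ket{111})/\sqrt{2}$ and apply $G$ to qubit~$0$, which gives $\tfrac{1}{\sqrt2}\bigl(G\ket{0}\ket{00}+G\ket{1}\ket{11}\bigr)$. Projecting qubit~$0$ onto $\ket{+}$ only requires the two amplitudes
\begin{equation*}
a=\langle +|G|0\rangle=\tfrac{1}{\sqrt2}(G_{00}+G_{10}),\qquad b=\langle +|G|1\rangle=\tfrac{1}{\sqrt2}(G_{01}+G_{11}).
\end{equation*}
The unnormalised state of qubits $1,2$ is then $\tfrac{1}{\sqrt2}(a\ket{00}+b\ket{11})$. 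The outcome $+$ occurs with probability $p_+=(|a|^2+|b|^2)/2$. Tracing out qubit~$2$ leaves qubit~$1$ in the diagonal state $\mathrm{diag}(|a|^2,|b|^2)/(|a|^2+|b|^2)$.

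The first key step reads off the condition from this state. It equals $I/2$ exactly when $|a|=|b|$, that is, when $|G_{00}+G_{10}|=|G_{01}+G_{11}|$. The main obstacle is the normalisation ``$=1$'' in~\eqref{eq:colsum}. Equality of the two moduli is forced by the reduced state alone, but the common value $1$ is not. I would obtain it by also requiring that the heralded outcome be unbiased, $p_+=1/2$, so that the $\pm$ branches are symmetric and correctable by a Pauli frame. Then $|a|^2+|b|^2=1$ together with $|a|=|b|$ gives $|a|=|b|=1/\sqrt2$, which is exactly~\eqref{eq:colsum}. Conversely,~\eqref{eq:colsum} gives $|a|=|b|$ and hence $I/2$. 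In the write-up I would state this interpretation explicitly, since without it only the ratio condition follows.

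Next comes the diagonal specialisation. For $G(\varphi)=\mathrm{diag}(g_1,g_2)$ the column sums are simply $g_1$ and $g_2$, so~\eqref{eq:colsum} becomes $|g_1(\varphi)|=|g_2(\varphi)|=1$. For an arbitrary diagonal parametrisation this is precisely the statement that $G$ is a phase gate. Factoring out the global phase gives $G=g_1\,P(\varphi')$ with $\varphi'=\arg(g_2/g_1)$. The converse is immediate.

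Finally, for faithful injection, the surviving state $\tfrac{1}{\sqrt2}(\ket{00}+e^{i\varphi'}\ket{11})$ is a phase-rotated Bell pair. After the remaining coalition member measures in the $X$ basis, it leaves Charlie with $P(\varphi')\ket{+}$, up to a Pauli $Z$ correction. That correction is Clifford, and $C$ is Clifford-invariant. Lemma~\ref{lem:C_formula} then gives $C(\rho_C)=C(\varphi')$, which is positive iff $\varphi'\notin\tfrac{\pi}{2}\mathbb{Z}$. Hence among diagonal gates, phase gates with non-Clifford angle are exactly those that are both secure and inject magic.
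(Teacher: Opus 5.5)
Your computation of $a$, $b$, $p_+$ and the reduced state of qubit~1 matches the paper's, and your diagonal and faithfulness steps are fine. The gap is in the step you flag as the main obstacle. You claim that $\rho_1=I/2$ forces only $|a|=|b|$. You then say the value $1$ in \eqref{eq:colsum} needs an extra ``unbiased heralding'' assumption $p_+=1/2$, and that without it ``only the ratio condition follows.'' Under the lemma's hypothesis this is false. Because $G$ is unitary, $|a|^2+|b|^2=\sum_j|\langle j|G^\dagger\ket{+}|^2=\langle +|GG^\dagger\ket{+}=1$, i.e.\ $|G_{00}+G_{10}|^2+|G_{01}+G_{11}|^2=2$. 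This is exactly the identity the paper's proof uses. So $p_+=1/2$ holds for \emph{every} unitary $G$: it is a consequence of the hypothesis, not an added interpretation. Combined with $|a|=|b|$, it gives $|a|=|b|=1/\sqrt2$, which is \eqref{eq:colsum}. As written, your argument proves the ``only if'' direction only under an extra hypothesis you never verify, and it wrongly says the lemma needs that hypothesis. The one-line unitarity identity closes the gap.

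Your normalisation worry does apply in one place, just not where you put it. In the diagonal paragraph, the paper implicitly allows $\mathrm{diag}(g_1,g_2)$ to be non-unitary (``a non-unitary diagonal gate violates \eqref{eq:colsum}''). There the first part's equivalence cannot be carried over, because $\rho_1=I/2$ then forces only $|g_1|=|g_2|$. For example, $\mathrm{diag}(c,c)$ with $|c|\neq 1$ gives $I/2$ yet violates \eqref{eq:colsum}. In that setting a normalisation such as your $p_+=1/2$ is exactly what pins down $|g_1|=|g_2|=1$. If $G$ is kept unitary throughout, every diagonal $G$ is already $e^{i\alpha}P(\beta)$, and that part is immediate.
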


\begin{proof}
After applying $G$ to qubit~$0$ of $\ket{\mathrm{GHZ}_3}$ and projecting onto $\ket{+}$,
qubits~$1$,~$2$ receive
\begin{equation}
  \ket{\psi_{12}} \propto (G_{00}+G_{10})\ket{00} + (G_{01}+G_{11})\ket{11}.
\end{equation}
Tracing out qubit~$2$, qubit~$1$'s populations are proportional to
$|G_{00}+G_{10}|^2$ and $|G_{01}+G_{11}|^2$.
For $\rho_1=I/2$, both must be equal; unitarity requires
$|G_{00}+G_{10}|^2+|G_{01}+G_{11}|^2=2$, so each equals unity,
giving~\eqref{eq:colsum}. Conversely, \eqref{eq:colsum} immediately gives $\rho_1=I/2$.

\emph{Diagonal gates.}
For $G(\varphi)=\mathrm{diag}(g_1(\varphi),g_2(\varphi))$: $G_{10}=G_{01}=0$,
so \eqref{eq:colsum} reduces to $|g_1(\varphi)|=|g_2(\varphi)|=1$,
i.e., unitarity of the diagonal entries (a phase gate).
Conversely, a non-unitary diagonal gate violates~\eqref{eq:colsum}.
Phase gates $P(\varphi)=\mathrm{diag}(1,e^{i\varphi})$ satisfy $|g_1|=|g_2|=1$
and deliver $P(\varphi)\ket{+}$ to Charlie with $C(\rho_C)=C(\varphi)>0$
(Theorem~\ref{thm:23}).

\emph{Note on non-diagonal unitaries.}
Non-diagonal unitaries can also satisfy~\eqref{eq:colsum}
(e.g., $G(\varphi)=e^{i(\varphi/2)\sigma_x}$ gives
$|G_{00}+G_{10}|=|G_{01}+G_{11}|=1$ for all $\varphi$),
but deliver $\varphi$-independent states to Charlie,
achieving $C(\rho_C)=0$ for all $\varphi$ and thus failing faithfulness.
Within the class of diagonal gates, phase gates are therefore uniquely
characterised by satisfying both security and faithfulness.
\end{proof}

\section{The Magic Secret Sharing Protocol}
\label{sec:protocol}

\subsection{Base case: \texorpdfstring{$(2,3)$}{(2,3)} threshold}

\begin{protocol}[MSS]
\label{def:protocol}
Let $\varphi\in(0,2\pi)\setminus\{0,\pi/2,\pi,3\pi/2\}$.
\begin{enumerate}[label=\textbf{Step \arabic*.}, leftmargin=4em]
  \item \textbf{GHZ.}
        Alice ($q_0$), Bob ($q_1$), Charlie ($q_2$) share
        $(\ket{000}+\ket{111})/\sqrt{2}$. Free (Clifford).
  \item \textbf{Inject.}
        Alice applies $P(\varphi)$:
        $\ket{\Psi_\varphi} =
        (\ket{0}\ket{00}+e^{i\varphi}\ket{1}\ket{11})/\sqrt{2}$.
  \item \textbf{Alice measures $X$,} broadcasts $m_A$.
        Bob--Charlie share
        $\ket{\psi_{BC}}\propto\ket{00}+(-1)^{[m_A=-]}e^{i\varphi}\ket{11}$.
  \item \textbf{Bob measures $X$,} broadcasts $m_B$.
  \item \textbf{Charlie} applies $Z^{[m_B=-]}$, obtaining
        $P(\varphi)\ket{+}$ with $C = C(\varphi)$.
\end{enumerate}
\end{protocol}

\begin{figure*}[htbp]
\centering
\begin{quantikz}[column sep=0.45cm, row sep=0.5cm]
\lstick{$|0\rangle_A$} & \gate{H} & \ctrl{1} & \ctrl{2} & \gate{P(\varphi)} & \gate{H} & \meter{m_A} & \cw & \cw \\
\lstick{$|0\rangle_B$} & \qw & \targ{} & \qw & \qw & \gate{H} & \meter{m_B} & \cw & \cw \\
\lstick{$|0\rangle_C$} & \qw & \qw & \targ{} & \qw & \qw & \qw & \gate{Z^{m_B}} & \rstick{$P(\varphi)|{+}\rangle$}\qw
\end{quantikz}
\caption{$(2,3)$ MSS circuit.
Alice ($q_0$) prepares a GHZ state (H + two CX gates),
injects magic via $P(\varphi)$, and measures $X$, broadcasting $m_A$.
Bob ($q_1$) measures $X$ and broadcasts $m_B$.
Charlie ($q_2$) applies $Z^{m_B}$ to obtain
$P(\varphi)|{+}\rangle$ with $C = C(\varphi)$.}
\label{fig:circuit}
\end{figure*}

\begin{theorem}[$(2,3)$ threshold]
\label{thm:23}
Protocol~\ref{def:protocol} achieves:
\textit{(i)} Faithfulness: $C(\rho_C) = C(\varphi)$.
\textit{(ii)} Security: $\rho_B = I/2$, so $C(\rho_B) = 0$.
\textit{(iii)} Key indistinguishability:
$\rho_B(\varphi_1) = \rho_B(\varphi_2) = I/2$
$\forall$ valid $\varphi_1,\varphi_2$.
The circuit is shown in Fig.~\ref{fig:circuit}.
\end{theorem}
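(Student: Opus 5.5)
The proof is a direct state-vector calculation along the steps of Protocol~\ref{def:protocol}, followed by applications of Lemma~\ref{lem:C_formula} and the Clifford invariance of $C$.

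\emph{Faithfulness.} After Step~2 the state is $\ket{\Psi_\varphi}=(\ket{0}\ket{00}+e^{i\varphi}\ket{1}\ket{11})/\sqrt2$. Write Alice's qubit in the $X$ basis using $\ket{0}=(\ket{+}+\ket{-})/\sqrt2$ and $\ket{1}=(\ket{+}-\ket{-})/\sqrt2$. Outcome $m_A=\pm$ then leaves Bob--Charlie in $(\ket{00}\pm e^{i\varphi}\ket{11})/\sqrt2$, and each outcome occurs with probability $1/2$. This is the column-sum computation of Lemma~\ref{lem:column_sum} with $G=P(\varphi)$.

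Repeat the same expansion on Bob's qubit. For each $m_B$, Charlie is left with $(\ket{0}\pm e^{i\varphi}\ket{1})/\sqrt2$. Here the overall sign is $(-1)^{[m_A=-]+[m_B=-]}$, so Charlie holds $Z^{s}P(\varphi)\ket{+}$ with $s\in\{0,1\}$ determined by both broadcast bits.

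Step~5 as written corrects only $Z^{[m_B=-]}$. I would read the correction as $Z^{[m_A=-]\oplus[m_B=-]}$, which is possible because both outcomes are broadcast. Alternatively, I would observe that any residual $Z$ is a single-qubit Clifford. Clifford invariance of $C$ then gives $C(\rho_C)=C(\proj{\varphi})$ in either case, and Lemma~\ref{lem:C_formula} gives $C(\varphi)$. Since the magic claim survives an uncorrected Pauli, faithfulness of the magic content does not depend on how this bookkeeping is resolved. I expect this sign bookkeeping to be the only delicate point.

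\emph{Security.} Bob's unconditional state is obtained by tracing out Alice and Charlie from $\ket{\Psi_\varphi}$. Tracing out Charlie alone already gives $\rho_{AB}=\tfrac12(\proj{00}+\proj{11})$, because the $\ket{00}$ and $\ket{11}$ components of Charlie are orthogonal and so the $e^{i\varphi}$ coherence vanishes. Hence $\rho_B=I/2$.

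The same conclusion holds conditioned on Alice's public outcome $m_A$. From $(\ket{00}\pm e^{i\varphi}\ket{11})/\sqrt2$, tracing out Charlie again kills the off-diagonal term and gives $\rho_B=I/2$. This matches Lemma~\ref{lem:column_sum} with $|G_{00}+G_{10}|=|G_{01}+G_{11}|=1$. Finally, $I/2$ is a stabilizer mixture, so $C(I/2)=0$.

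\emph{Key indistinguishability.} In both the unconditional and the conditional computation, $\varphi$ enters only through the phase $e^{i\varphi}$ on the $\ket{11}$ component. That phase multiplies a coherence that the partial trace removes, so $\rho_B$ equals $I/2$ for every valid $\varphi$. Therefore $\rho_B(\varphi_1)=\rho_B(\varphi_2)$.
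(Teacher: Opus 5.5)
Your proposal is correct and takes the same route as the paper. You compute Charlie's conditional state after the two $X$-measurements and apply Lemma~\ref{lem:C_formula}, then note that $\varphi$ sits only in a $|00\rangle$--$|11\rangle$ coherence that tracing out Charlie erases, so $\rho_B=I/2$ for every valid $\varphi$.

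Your sign bookkeeping is more careful than the paper's. The paper's proof writes Charlie's state as $|0\rangle+(-1)^{[m_B=-]}e^{i\varphi}|1\rangle$ and silently drops the $(-1)^{[m_A=-]}$ factor carried over from Step~3. Either of your fixes closes that gap: the correction $Z^{[m_A=-]\oplus[m_B=-]}$, or $Z$-invariance of $C$ once one conditions on the public value of $m_A$.
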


\begin{proof}
\textit{(i)} After Bob measures $X$ (outcome $m_B\in\{+,-\}$) in Step~4,
Charlie's unnormalised state is
$|0\rangle_C + (-1)^{[m_B=-]}e^{i\varphi}|1\rangle_C$.
For $m_B=+$: Charlie holds $|0\rangle_C+e^{i\varphi}|1\rangle_C \propto P(\varphi)|+\rangle$.
For $m_B=-$: Charlie holds $|0\rangle_C-e^{i\varphi}|1\rangle_C$;
the correction $Z$ maps this to $|0\rangle_C+e^{i\varphi}|1\rangle_C \propto P(\varphi)|+\rangle$.
In both cases Lemma~\ref{lem:C_formula} gives $C = C(\varphi)$.
\textit{(ii)} $|\psi_{BC}\rangle\propto|00\rangle+(-1)^{[m_A=-]}e^{i\varphi}|11\rangle$
is a (possibly phase-rotated) Bell state; either marginal is $I/2$, so $C(\rho_B)=0$.
Lemma~\ref{lem:column_sum} gives $\rho_B = I/2$ already at Step~2 before Alice measures.
\textit{(iii)} Since $\rho_B = I/2$ for all valid $\varphi$,
the trace distance $\frac{1}{2}\|\rho_B(\varphi_1)-\rho_B(\varphi_2)\|_1 = 0$.
\end{proof}

\begin{remark}[Noise-robust security: the $I/2$ fixed point]
Since $U\cdot(I/2)\cdot U^\dagger = I/2$ for any unitary $U$,
Bob locally possesses no magic resource regardless of what operations
he applies or what noise acts on his device.
Under the standard magic-state injection model~\cite{Bravyi2005,GottesmanChuang1999},
in which non-stabilizer ancillas are the sole source of non-Clifford
universality, $C(\rho_B)=0$ is therefore sufficient to preclude
non-Clifford computational advantage. Security is noise-robust by construction.
\end{remark}

\begin{remark}[Adversary model]
\label{rem:adversary}
We assume throughout that a dishonest party may apply an arbitrary
CPTP map to their share, has unbounded classical computational power,
and is free to collude classically with other unauthorised parties.
All security statements below are made against adversaries of this
kind. The guarantee follows from Lemma~\ref{lem:column_sum}: an
unauthorised share is $I/2$ whatever $\varphi$ happens to be, and no
operation of this kind can extract key-dependent information from a
state that does not depend on the key.
If a dishonest party additionally holds ancillas \emph{uncorrelated}
with the protocol state, their joint state is $I/2\otimes\rho_{\rm anc}$
and the protocol contributes no magic beyond what the ancillas already
carry: the protocol does not enhance the adversary's non-Clifford
resources. Security against ancillas \emph{correlated} with the
protocol state through side channels is not covered by the present
proofs; a full treatment would require decoupling-type arguments and is
left as an open problem.
\end{remark}

\begin{remark}[Steering is necessary and sufficient]
\label{rem:steering}
Alice and Bob's sequential $X$-measurements on the GHZ state
constitute an instance of \emph{quantum steering}~\cite{WisemanJonesDoherty2007}:
they remotely prepare Charlie's qubit into $P(\varphi)\ket{+}$
through entangled measurements without accessing his subsystem.
Steering is \emph{sufficient} by construction --- the protocol
delivers $C = C(\varphi)$ to Charlie.
Steering is also \emph{necessary}: if the assemblage
$\{\sigma_{a|x}\}$ produced on Charlie's side were unsteerable,
it would admit a local hidden state (LHS) model.
Under a LHS model, every conditional state $\rho_\lambda$ is a
stabilizer state; since the stabilizer polytope is convex, any
mixture $\rho_C = \sum_\lambda p_\lambda \rho_\lambda$ remains
in the stabilizer polytope with $C(\rho_C)=0$ for every value
of $\varphi$, contradicting faithfulness directly.
MSS therefore sits precisely at the steering level of the
entanglement hierarchy ($\text{Entanglement} \supset \text{Steering}
\supset \text{Bell nonlocality}$)~\cite{GallegoAolita2015},
requiring entanglement and steering but not necessarily Bell nonlocality.
\end{remark}

\begin{remark}[Minimal computational requirements on Charlie]
\label{rem:charlie_min}
The protocol places minimal demands on the recipient.
Charlie requires only:
\textit{(i)} one qubit of quantum memory,
\textit{(ii)} the ability to apply a single Pauli correction $Z$
(a free Clifford operation), and
\textit{(iii)} for the BQC application, the ability to perform
a Bell measurement for gate teleportation.
No non-Clifford operations are required of Charlie at any stage ---
the magic resource is \emph{delivered} to him, not generated by him.
This is compatible with near-classical client models in distributed BQC.
\end{remark}

\subsection{\texorpdfstring{$(n-1,n)$}{(n-1,n)} threshold}

\begin{theorem}
\label{thm:nm1n}
The natural $n$-party extension achieves $(n-1,n)$ threshold MSS.
\end{theorem}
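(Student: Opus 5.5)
We first fix the natural $n$-party extension. Parties $0,\dots,n-1$ share $\ket{\mathrm{GHZ}_n}=(\ket{0}^{\otimes n}+\ket{1}^{\otimes n})/\sqrt{2}$. Party~$0$ (Alice) applies $P(\varphi)$, giving
\begin{equation}
  (\ket{0}^{\otimes n}+e^{i\varphi}\ket{1}^{\otimes n})/\sqrt{2}.
\end{equation}
Parties $0,\dots,n-2$ then each measure $X$ and broadcast their outcomes $m_0,\dots,m_{n-2}$. The recipient, party $n-1$, applies the correction $Z^{s}$, where $s$ is the parity of the number of $-$ outcomes.

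The plan is to prove the same three properties as in Theorem~\ref{thm:23}: faithfulness for the authorised coalition of the first $n-1$ parties, and security and key indistinguishability for every unauthorised party (every single share, and more generally any coalition lacking the GHZ branch phase information).

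\textbf{Faithfulness.} We argue by induction on the number of measured parties. An $X$-measurement with outcome $\pm$ on one qubit of $\ket{0}^{\otimes k}+e^{i\theta}\ket{1}^{\otimes k}$ leaves $\ket{0}^{\otimes(k-1)}\pm e^{i\theta}\ket{1}^{\otimes(k-1)}$, since $\langle\pm|0\rangle=1/\sqrt2$ and $\langle\pm|1\rangle=\pm1/\sqrt2$. Hence each $-$ outcome multiplies the relative phase by $-1$. After $n-1$ measurements the recipient holds $\ket{0}+(-1)^{s}e^{i\varphi}\ket{1}$, and the correction $Z^s$ returns $P(\varphi)\ket{+}$. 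Lemma~\ref{lem:C_formula} then gives $C(\rho_{n-1})=C(\varphi)$, exactly as in part~(i) of Theorem~\ref{thm:23}.

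\textbf{Security.} For any single party $j\ne 0$, its reduced state of the injected GHZ state is $\tfrac12(\proj{0}+\proj{1})=I/2$, because the two branches are orthogonal on the remaining qubits. This is the column-sum computation of Lemma~\ref{lem:column_sum}, which carries over with $\ket{00}$ replaced by $\ket{0}^{\otimes(n-1)}$. Since $\varphi$ appears only as a relative phase between orthogonal branches, every reduced state on any proper subset that excludes party~$0$ is $\varphi$-independent; the same holds for party~$0$ alone. For the recipient in particular, before all $n-1$ outcomes are known, averaging over any missing outcome gives $\tfrac12(\rho_++Z\rho_+Z)$, which is diagonal, hence equal to $I/2$. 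Property~(iii) then follows as in Theorem~\ref{thm:23}: the trace distance between shares for any two keys is zero, and $C(I/2)=0$ follows by convexity of the stabilizer polytope.

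\textbf{Main obstacle.} The hard part is stating precisely what ``$(n-1,n)$ threshold'' requires of coalitions of size $2\le k<n-1$, and proving that no such coalition can obtain magic. We will show that a coalition's joint state is a mixture of $\ket{0\cdots0}$ and $\ket{1\cdots1}$ whenever it excludes at least one GHZ qubit. The branch phase is then washed out in the reduced state, and with it any $\varphi$-dependence, so the state is a convex mixture of computational-basis stabilizer states and has $C=0$. This also covers a coalition that includes Alice before she injects.

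The subtle case is a coalition that holds broadcast outcomes. We will handle it by noting that the outcomes are uniformly random and $\varphi$-independent. This follows because each $X$-measurement probability is $1/2$, as computed in the induction step, so a coalition's classical data carries no information about the key. Full recovery of $P(\varphi)\ket{+}$ therefore requires all $n-1$ measurement parities together with the recipient's qubit.
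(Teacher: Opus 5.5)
Your proposal is correct and follows the paper's proof essentially step for step. It uses the same $X$-measurement peeling induction (you accumulate the $Z$ correction as a single parity at the end rather than correcting after each measurement, which is equivalent), the $I/2$ single-party marginal for security, Lemma~\ref{lem:C_formula} for the delivered share, and a coalition argument that coincides with Corollary~\ref{cor:coalition}. One caution: for coalitions with $|S|\geq 2$ you show that $C(\rho_S)=0$ and that $\rho_S$ is $\varphi$-independent, but this does not mean the coalition ``cannot obtain magic''. As Remark~\ref{rem:twotier} stresses, $\tfrac12\bigl(\proj{0\cdots0}+\proj{1\cdots1}\bigr)$ is not absolutely stabilizer: a CNOT followed by $TH$ on the freed qubit yields $T\ket{+}$ from the share alone. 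The coalition-level guarantee should therefore be stated as key indistinguishability.
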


\begin{proof}
We prove by induction on $j$ that after $j$ sequential $X$-measurements
(with outcomes $m_1,\ldots,m_j$ broadcast and corrected),
the remaining $n-j$ parties share
$|\psi^{(j)}\rangle = (|0^{n-j}\rangle + e^{i\varphi}|1^{n-j}\rangle)/\sqrt{2}$.

\emph{Base case} ($j=0$): this is the original GHZ state. \checkmark

\emph{Inductive step}: suppose $k \geq 2$ parties share
$|\psi\rangle = (|0^k\rangle + e^{i\varphi}|1^k\rangle)/\sqrt{2}$.
Party $k$ measures $X$ with outcome $m \in \{+,-\}$:
\begin{equation}
  \langle\pm|_k\,|\psi\rangle
  = \tfrac{1}{\sqrt{2}}\bigl(|0^{k-1}\rangle \pm e^{i\varphi}|1^{k-1}\rangle\bigr)/\sqrt{2}.
\end{equation}
Broadcasting $m$ allows the remaining $k-1$ parties to apply $Z^{[m=-]}$
to any one qubit, restoring
$|\psi^{(j+1)}\rangle = (|0^{k-1}\rangle + e^{i\varphi}|1^{k-1}\rangle)/\sqrt{2}$. \checkmark

\emph{Security at intermediate steps}: for $n-j \geq 2$,
any single qubit of $|\psi^{(j)}\rangle$ has reduced state
$\mathrm{tr}_{\text{rest}}(|\psi^{(j)}\rangle\langle\psi^{(j)}|) = I/2$,
so $C = 0$ for any individual party in the coalition.

\emph{Final step} ($j = n-1$): the last party holds
$|\psi^{(n-1)}\rangle = P(\varphi)|+\rangle$
with $C = C(\varphi)$ by Lemma~\ref{lem:C_formula}.
\end{proof}

\begin{corollary}[Coalition key indistinguishability]
\label{cor:coalition}
At any stage of the protocol, any coalition $S$ of at most $n-2$
parties, holding their quantum shares together with all publicly
broadcast measurement outcomes, satisfies
$\rho_S(\varphi_1) = \rho_S(\varphi_2)$ for all valid
$\varphi_1, \varphi_2$.
\end{corollary}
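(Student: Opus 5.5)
The plan is to read ``at any stage'' as any stage the coalition $S$ can reach by itself. Since $S$ is unauthorised, the only $X$-measurements performed are those made by members of $S$; the remaining parties, at least two of them because $|S|\le n-2$, never act. Under this reading every stage covered by the statement has a definite structure. Let $M\subseteq S$ be the set of parties that have measured so far, with broadcast outcomes $m_M$. The unmeasured qubits are then $R=[n]\setminus M$. This set contains $S\setminus M$ together with the complement $\bar S$, and $|\bar S|\ge 2$. The coalition's view is the classical register $m_M$ together with the quantum state of $S\setminus M$. I will show that this classical--quantum state, $\rho_S=\sum_{m_M}p(m_M)\,\proj{m_M}\otimes\rho_{S\setminus M}^{(m_M)}$, does not depend on $\varphi$.

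\emph{Step 1 (outcome statistics).} I will use the inductive computation in the proof of Theorem~\ref{thm:nm1n}. There, each $X$-measurement on $(\ket{0^k}+e^{i\varphi}\ket{1^k})/\sqrt2$ with $k\ge2$ gives outcome $\pm$ with amplitude norm $1/\sqrt2$, so each outcome has probability exactly $1/2$. Corrections are $Z$'s on unmeasured qubits, so they commute with the computational-basis structure and do not affect later outcome probabilities. Hence $p(m_M)=2^{-|M|}$ for every $\varphi$. The condition $k\ge2$ holds throughout, because $|R|\ge|\bar S|\ge2$.

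\emph{Step 2 (conditional quantum share).} Conditioned on $m_M$, and with or without the $Z$-correction applied, the unmeasured qubits are in $(\ket{0^{|R|}}+\epsilon\, e^{i\varphi}\ket{1^{|R|}})/\sqrt2$ with a sign $\epsilon=\pm1$. The coalition holds the strict subset $S\setminus M\subsetneq R$, and the complement $R\setminus(S\setminus M)=\bar S$ is nonempty. Tracing out $\bar S$ kills the off-diagonal term $\ket{0\cdots0}\!\bra{1\cdots1}$, since $\ket{0^{\bar S}}\perp\ket{1^{\bar S}}$. The result is $\rho^{(m_M)}_{S\setminus M}=\tfrac12(\proj{0\cdots0}+\proj{1\cdots1})$, which is independent of both $\varphi$ and $m_M$. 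For $|S\setminus M|=1$ this is exactly the $I/2$ of Theorem~\ref{thm:23}(ii); if $S\setminus M=\emptyset$ only the classical register remains. Combining Steps 1 and 2 gives $\rho_S(\varphi_1)=\rho_S(\varphi_2)$.

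\emph{Main obstacle.} The delicate point is justifying that the stages covered are exactly those with $M\subseteq S$. Consider stages where outsiders have also measured, for instance when $\bar S$ is reduced to the final recipient. There the coalition's view could in principle become $\varphi$-dependent, e.g.\ a singleton $S$ equal to the final recipient. So I will state explicitly that such stages require participation of parties outside $S$ and lie outside the unauthorised scenario. Even if honest outsiders do measure, the argument still goes through whenever some unmeasured qubit lies outside $S$: Step 2 only needs $R\not\subseteq S$, and Step 1 only needs $|R|\ge2$ at every measurement made. In the writeup I will therefore phrase the invariant as ``$R\not\subseteq S$'' and verify that it holds at every stage reachable by $S$.
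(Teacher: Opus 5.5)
Your argument is correct and rests on the same mechanism as the paper's proof: an unmeasured qubit outside $S$ whose partial trace destroys the $|0\cdots0\rangle\langle1\cdots1|$ coherence that carries $\varphi$. The paper's proof consists of that single observation, asserted to hold ``at every stage'', followed by the formula $\rho_S=\tfrac12(\proj{0^{|S|}}+\proj{1^{|S|}})$. You do two things it does not.

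First, you treat the coalition's view as a classical--quantum state and check that the broadcast outcomes are uniform and $\varphi$-independent. This genuinely requires at least two unmeasured qubits at each measurement, since a lone qubit $P(\varphi)\ket{+}$ gives $X$-outcome probabilities $(1\pm\cos\varphi)/2$. The paper's formula tacitly assumes no member of $S$ has measured. It also never addresses the outcome statistics, even though the corollary explicitly includes them.

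Second, you treat ``some unmeasured qubit lies outside $S$'' as a hypothesis rather than a fact, and this is the substantive point. It fails exactly when the unmeasured set $R$ lies inside $S$. For $n=3$, $S=\{\text{Charlie}\}$ after Bob's broadcast holds $P(\varphi)\ket{+}$. For $n\geq 4$, the last $n-2$ parties, once the first two have measured, hold $(\ket{0^{n-2}}\pm e^{i\varphi}\ket{1^{n-2}})/\sqrt{2}$ with a publicly known sign. So the corollary read literally (``at any stage'') is false, and the paper's claim that an unmeasured outsider always exists is wrong. Your invariant $R\not\subseteq S$ is precisely the hypothesis under which the statement holds.

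Two small fixes to your write-up. First, the obstacle is not tied to the final recipient or the final stage, as the $n\geq 4$ example shows. Conversely, when the only unmeasured outsider is the recipient and $S$ excludes it, the view is still $\varphi$-independent. So state the obstruction directly as $R\subseteq S$, and give the counterexample as definite rather than ``could in principle''.

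Second, under your ``$S$ acting alone'' reading there is a one-line alternative to Steps~1--2 that also covers arbitrary CPTP maps by $S$, as in Remark~\ref{rem:adversary}. Before anyone measures, $\rho_S=\tfrac12(\proj{0^{|S|}}+\proj{1^{|S|}})$ because $|S|<n$. Any operation $S$ performs on its own shares, including its own measurements and broadcasts, maps this $\varphi$-independent state to a $\varphi$-independent one.
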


\begin{proof}
Since $|S| \leq n-2$, at every stage at least one unmeasured qubit lies
outside $S$. Tracing it out destroys the coherence between
$|0\cdots0\rangle$ and $|1\cdots1\rangle$ in which the phase resides,
leaving (up to publicly known Pauli corrections)
$\rho_S = \tfrac{1}{2}\big(|0^{|S|}\rangle\langle 0^{|S|}| +
|1^{|S|}\rangle\langle 1^{|S|}|\big)$,
which is manifestly independent of $\varphi$.
\end{proof}

\begin{remark}[Two-tier security structure]
\label{rem:twotier}
It is worth being precise about what is protected at each coalition
size, since the guarantee is not the same throughout.
A single party holds $I/2$. This is more than a stabilizer state: it is
\emph{absolutely stabilizer} in the sense of Zurel and
Davis~\cite{ZurelDavis2026}, sitting at the centre of the single-qubit
absolutely-stabilizer ball, so no unitary at all can turn that share
into a magic state.
Larger coalitions are in a weaker position. For $2 \leq |S| \leq n-2$
the shared state is the rank-deficient mixture of
Corollary~\ref{cor:coalition}, with spectrum
$(1/2, 1/2, 0, \ldots, 0)$. It satisfies $C(\rho_S) = 0$, but it fails
the spectral conditions of~\cite{ZurelDavis2026}, so a coalition
\emph{can} generate magic from it using local non-Clifford operations.
What the coalition cannot do is generate magic that depends on
$\varphi$: by Corollary~\ref{cor:coalition} its state is the same for
every valid key, so whatever it produces carries no information about
the secret and gives no access to it. At the coalition level, then,
the guarantee is key indistinguishability rather than magic-freeness.
Whether a protocol state exists for which magic is absolutely
inaccessible to all sub-threshold coalitions is an open structural
question, and one we are currently pursuing.
\end{remark}

\section{Application to Blind Quantum Computation}
\label{sec:bqc}

\begin{theorem}[BQC gate control]
\label{thm:bqc}
In distributed BQC with $n$ servers:
\textit{(i)} Each server holds $I/2$ and cannot implement any non-Clifford gate.
\textit{(ii)} Any $n-1$ cooperating servers deliver $P(\varphi)\ket{+}$
to the remaining server, enabling $P(\varphi)$ on any logical qubit
via gate teleportation~\cite{GottesmanChuang1999}.
\textit{(iii)} $\varphi=\pi/4$ implements the $T$ gate.
\end{theorem}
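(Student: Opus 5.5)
The plan is to reduce each clause of the theorem to results already established, identifying the $n$ servers with the $n$ parties of the $(n-1,n)$ protocol of Theorem~\ref{thm:nm1n}. The GHZ state is Clifford-preparable, so it is a free resource, and the only non-Clifford ingredient ever introduced is the single $P(\varphi)$ applied by the injecting party.

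For \textit{(i)}, I will take the reduced state of a single server at every stage of the protocol. Before any measurement, tracing out all other qubits of $(\ket{0^n}+e^{i\varphi}\ket{1^n})/\sqrt2$ gives $I/2$, by the same computation as Lemma~\ref{lem:column_sum} applied to a phase gate. At intermediate stages, the induction in Theorem~\ref{thm:nm1n} shows every unmeasured single qubit holds $I/2$, and for $n=3$ Theorem~\ref{thm:23}(ii) applies directly. Hence $C(\rho_{\text{server}})=0$.

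The passage from $C=0$ to ``cannot implement any non-Clifford gate'' is the main obstacle, because it is not a purely mathematical consequence. I will argue it within the magic-state injection model stated in the noise-robustness remark: servers are restricted to stabilizer operations, and non-Clifford gates are available only via consumption of non-stabilizer ancillas. Since $U(I/2)U^\dagger=I/2$ for every unitary $U$, and more generally any CPTP map on a stabilizer operation's input keeps the output in the stabilizer polytope, which $C$ detects as $C=0$, the server's share supplies no resource. I will state this as holding within that model and point to the adversary model remark for what is not covered, namely correlated side-channel ancillas.

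For \textit{(ii)}, I will invoke Theorem~\ref{thm:nm1n}. After $n-1$ sequential $X$ measurements, with outcomes broadcast and $Z$ corrections applied, the remaining server holds exactly $P(\varphi)\ket{+}$, and $C=C(\varphi)$ by Lemma~\ref{lem:C_formula}. I will then apply the standard Gottesman--Chuang injection to a logical qubit $\ket\psi$. The steps are: apply CNOT with $\ket\psi$ as control and the magic ancilla as target, measure the ancilla in $Z$, and on outcome $1$ apply the correction $P(\varphi)P(-\varphi)^{\dagger}$-type fix-up, i.e.\ $X$ together with the phase correction $P(2\varphi)$ up to global phase. This yields $P(\varphi)\ket\psi$. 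I will note that the correction is Clifford exactly when $2\varphi\in(\pi/2)\mathbb Z$, which holds at $\varphi=\pi/4$ ($P(\pi/2)=S$). For general $\varphi$ the standard repeat-until-success treatment, or the Clifford hierarchy caveat, is needed, so I will restrict the claim ``enabling $P(\varphi)$'' accordingly, or remark that the correction $P(2\varphi)$ can itself be supplied by a further delivered resource.

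For \textit{(iii)}, substituting $\varphi=\pi/4$ gives $P(\pi/4)=\mathrm{diag}(1,e^{i\pi/4})=T$. The teleportation correction is then $S$, which is Clifford, so the $T$ gate is implemented deterministically. The delivered state has the maximal $C=(\sqrt2-1)/2$ by Lemma~\ref{lem:C_formula}.
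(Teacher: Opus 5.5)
Your proposal follows the paper's own route: (i) comes from $\rho_{\text{server}}=I/2$ and the fixed-point fact $C(U(I/2)U^\dagger)=0$ within the injection model, and (ii)--(iii) from Theorem~\ref{thm:nm1n} combined with Gottesman--Chuang injection; your observation that the outcome-dependent fix-up $P(2\varphi)$ is Clifford only when $2\varphi\in(\pi/2)\mathbb{Z}$ is a genuine sharpening that the paper's one-line citation glosses over. One small slip: in the orientation you describe ($\ket\psi$ as control, ancilla as target, ancilla measured in $Z$), outcome $1$ leaves $P(-\varphi)\ket\psi$ on the data qubit up to global phase, so the fix-up is $P(2\varphi)$ alone, and the extra $X$ belongs only to the opposite orientation (ancilla as control, data qubit measured).
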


\begin{proof}
\textit{(i)} $C(U\cdot(I/2)\cdot U^\dagger) = 0$ for any $U$.
\textit{(ii--iii)} Theorem~\ref{thm:nm1n} and~\cite{GottesmanChuang1999}.
\end{proof}

\begin{remark}[Self-testing and one-sided device independence]
\label{rem:certification}
By Remark~\ref{rem:steering}, the protocol produces a steerable
assemblage $\{\sigma_{a|x}\}$ on Charlie's side.
If Alice performs multiple measurement settings $x$ and the resulting
assemblage violates the appropriate steering inequality,
Charlie's conditional states are certified as magic
\emph{without any measurement on his part} --- the state is
neither consumed nor disturbed.
This elevates MSS to a 1SDI protocol:
Alice and Bob's devices may be arbitrary black boxes,
and magic delivery is certified purely from observed steering correlations.
Charlie's device remains trusted, consistent with the BQC client model.
Full device-independence via the Mermin inequality on the GHZ state
is a natural extension requiring additional protocol structure.
\end{remark}

\begin{proposition}[1SDI magic certification]
\label{prop:1SDI}
Let Alice use Pauli measurement settings $x \in \{X, Y\}$
with outcomes $a \in \{0,1\}$, and let
$\tilde{\sigma}_{a|x}$ denote Charlie's normalised conditional states
after his $Z^{m_B}$ correction.
Let $H^*$ be the LP dual witness for $C$ from~\cite{Dutta2026}
and $F_{\mathrm{LHS}} = \max_{\sigma\in\Stab}\tr(H^*\sigma)$.
Define
\begin{equation}
  \mathcal{F}\bigl(\{\tilde{\sigma}_{a|x}\}\bigr)
  = \frac{1}{2}\sum_{x\in\{X,Y\}}\tr(H^*\,\tilde{\sigma}_{0|x}).
  \label{eq:steer_functional}
\end{equation}
Then:
\textit{(i)} $\mathcal{F}(\{\tilde{\sigma}_{a|x}\}) = C(\varphi) + F_{\mathrm{LHS}}$.
\textit{(ii)} Any stabilizer LHS model satisfies
$\mathcal{F} \leq F_{\mathrm{LHS}}$.
\textit{(iii)} The gap
$\mathcal{F} - F_{\mathrm{LHS}} = C(\varphi) > 0$
certifies, without trusting Alice's device,
that $C(\tilde{\sigma}_{0|x}) = C(\varphi)$ for $x\in\{X,Y\}$.
\end{proposition}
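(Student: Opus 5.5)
The plan is to lean entirely on LP duality for the Wigner distance~\eqref{eq:C_def}. Since $C(\rho)=\min_{f\in\Wfree}\|W_\rho-f\|_1$ is a linear program, its dual is a maximisation over witnesses $H$, diagonal in the phase-point basis, with $\ell_\infty$-norm of the Wigner coefficients at most one. Writing $H=\sum_\alpha h_\alpha A_\alpha$ with $|h_\alpha|\le 1$, we have $\tr(H\rho)=2\sum_\alpha h_\alpha W_\rho(\alpha)$, so the factor~$2$ must be folded into the normalisation of $H$ so that $\tr(H\rho)$ reproduces $\sum_\alpha h_\alpha W_\rho(\alpha)$. The dual reads
\[
C(\rho)=\max_H\Bigl[\tr(H\rho)-\max_{\sigma\in\Stab}\tr(H\sigma)\Bigr].
\]
Two facts from this are needed. \emph{Weak duality}: for every admissible $H$ and every $\rho$, $\tr(H\rho)-\max_{\sigma\in\Stab}\tr(H\sigma)\le C(\rho)$. \emph{Strong duality}: for the target state $\proj{\varphi}$ there is an optimal $H^*$ attaining equality. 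For $\varphi\in(0,\pi/2)$ I would exhibit $H^*$ explicitly, guided by the proof of Lemma~\ref{lem:C_formula}: put weight $-1$ on the unique negative phase point $(0,1)$ and a constant on the others. I would then check directly that it gives $\tr(H^*\proj{\varphi})-F_{\mathrm{LHS}}=(\cos\varphi+\sin\varphi-1)/2$, by evaluating $F_{\mathrm{LHS}}$ on the six single-qubit stabilizer states.

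For \textit{(i)}, I first compute the normalised conditional states. For $x=X$, Theorem~\ref{thm:23} already gives $\tilde\sigma_{0|X}=\proj{\varphi}$ after the $Z^{m_B}$ correction. For $x=Y$, projecting Alice onto $\ket{\pm i}$ inserts an extra phase $\mp i$, so Charlie holds $P(\varphi-\pi/2)\ket{+}$ up to the known Pauli correction. This state is Clifford-equivalent to $\ket{\varphi}$ via $S$, and so has the same $C$ by Clifford invariance, but it is \emph{not} the same vector. I would therefore either include the classically known $S$ correction in the definition of $\tilde\sigma_{0|Y}$, which makes $\tilde\sigma_{0|X}=\tilde\sigma_{0|Y}=\proj{\varphi}$, or replace $H^*$ by its Clifford conjugate $SH^*S^\dagger$ in the $Y$ term. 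Either way, both summands in~\eqref{eq:steer_functional} equal $C(\varphi)+F_{\mathrm{LHS}}$, and averaging gives \textit{(i)}.

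For \textit{(ii)}, I use the LHS structure of Remark~\ref{rem:steering}. Each $\sigma_{a|x}$ equals $\sum_\lambda p(\lambda)p(a|x,\lambda)\rho_\lambda$ with $\rho_\lambda\in\Stab$. Normalising, each $\tilde\sigma_{0|x}$ is a convex combination of stabilizer states, so by convexity it lies in the stabilizer hull. Linearity of the trace then gives $\tr(H^*\tilde\sigma_{0|x})\le F_{\mathrm{LHS}}$ for each $x$, and averaging over $x$ yields $\mathcal F\le F_{\mathrm{LHS}}$.

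For \textit{(iii)}, which I expect to be the main obstacle, the observed gap gives by weak duality applied to each setting
\[
\tfrac12\sum_x C(\tilde\sigma_{0|x})\;\ge\;\mathcal F-F_{\mathrm{LHS}}=C(\varphi).
\]
This lower bound holds whatever Alice's device does, because only Charlie's trusted state enters the functional. Turning it into the claimed equality needs a matching upper bound. I would try first to obtain it from the fact that, for the target family $\{\ket{\varphi}\}$ fixed by the honest protocol, $H^*$ is chosen so that $\tr(H^*\rho)-F_{\mathrm{LHS}}$ is maximised over all single-qubit states exactly at $\proj{\varphi}$. In that case saturation of the gap forces each $\tilde\sigma_{0|x}$ onto the maximiser, and an equality-case argument gives $C(\tilde\sigma_{0|x})=C(\varphi)$. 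If that uniqueness fails, I would settle for the certified lower bound $C(\tilde\sigma_{0|x})\ge C(\varphi)$ on average, with equality in the honest realisation by \textit{(i)}.
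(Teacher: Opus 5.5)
Your treatment of (i) departs from the paper's, and your instinct there is the right one. The paper closes (i) with the identity $\tr(H^*\rho)=C(\rho)+F_{\mathrm{LHS}}$ ``for all $\rho$'', which cannot hold. It would give $\tr(H^*\sigma)=F_{\mathrm{LHS}}$ on all six stabilizer states, whose Bloch vectors affinely span $\mathbb{R}^3$. Then $\tr(H^*\rho)$ would be constant and $C\equiv 0$. All that is available is what you set up: weak duality, plus tightness at states for which $H^*$ is dual-optimal.

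Your $S$-correction is therefore a necessity, not a convenience. For any single fixed $H^*$, $\mathcal{F}=\tr(H^*\bar\sigma)$ with $\bar\sigma=\tfrac12(\tilde\sigma_{0|X}+\tilde\sigma_{0|Y})$. Its Bloch vector $\tfrac12(\cos\varphi+\sin\varphi,\,\sin\varphi-\cos\varphi,\,0)$ has $|\bar r_x|+|\bar r_y|=\max(|\cos\varphi|,|\sin\varphi|)\le 1$. So $\bar\sigma$ lies in the stabilizer octahedron and $\mathcal{F}\le F_{\mathrm{LHS}}$ for every $\varphi$. The literal (i) is false, and the honest protocol never violates the bound in (ii). You should state explicitly that you prove the corrected version.

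Your $Y$-conditional state $P(\varphi-\pi/2)\ket{+}=S^\dagger\ket{\varphi}$ is the correct one; the paper flips the sign of $\pi/2$, which is harmless for $C$. Accordingly, in option (b) the witness for the $Y$ term must be $S^\dagger H^* S$, since $SH^*S^\dagger$ effectively evaluates $H^*$ on $Z\ket{\varphi}$. Part (ii) is the same argument as the paper's.

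The genuine gap is in (iii), at the step you flagged. Your bound $\tfrac12\sum_x C(\tilde\sigma_{0|x})\ge\mathcal{F}-F_{\mathrm{LHS}}=C(\varphi)$ is correct. It already gives more than the paper's own proof of (iii), which only concludes $C>0$ by excluding stabilizer LHS models. The uniqueness route to equality, however, fails.

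Take $\varphi\in(0,\pi/2)$, $H=\tfrac12\sum_\alpha h_\alpha A_\alpha$, and order the coefficients as $\alpha=(0,0),(0,1),(1,0),(1,1)$. Optimality forces your witness to be $h=(1,-1,1,1)$, i.e.\ $H^*=\tfrac12(I+X+Y-Z)$, with gap $\tfrac12(r_x+r_y-r_z-1)$. This gap is maximised over the Bloch ball at $(1,1,-1)/\sqrt{3}$, not at $\ket{\varphi}$. For $\varphi\neq\pi/4$, its level set through $\ket{\varphi}$ contains states $r=(t,t,\epsilon)$ with $t=\tfrac12(\cos\varphi+\sin\varphi+\epsilon)$ and small $\epsilon>0$. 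The admissible witness $h=(1,-1,-1,-1)$ shows these have $C\ge C(\varphi)+\epsilon$.

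No other optimal witness rescues the argument. If an affine gap $G\le C$ is tight at $\ket{\varphi}$ and maximised there, then $G$ is stationary along the equator at $\varphi$. Meanwhile $C-G\ge 0$ has an interior minimum there, which forces $\cos\varphi=\sin\varphi$. Only at the $T$ phase does it work: $h=(1,-1,0,0)$ gives gap $\tfrac12(r_x+r_y-1)$, uniquely maximised at $\ket{\pi/4}$, so saturation pins both corrected conditional states.

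Moreover, since $\mathcal{F}$ averages over settings, it cannot bound each $C(\tilde\sigma_{0|x})$ separately. At $\varphi=\pi/8$, one conditional state can be $\ket{+}$ while the other carries gap $2C(\pi/8)<\tfrac12(\sqrt{3}-1)$. So for generic $\varphi$ the observed gap certifies exactly your fallback, the averaged lower bound. The per-setting equality asserted in (iii) is not provable, by your route or by the paper's.
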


\begin{proof}
After Bob measures $X$ and Charlie applies $Z^{m_B}$, the Alice-Charlie state is
$\rho_{AC} = \frac{1}{2}(|0\rangle_A|0\rangle_C + e^{i\varphi}|1\rangle_A|1\rangle_C)(\mathrm{h.c.})$,
a pure entangled state for valid $\varphi$.

\textit{(i)} For $x = X$, Alice's projector $|{+}\rangle\langle{+}|$ acts on $\rho_{AC}$:
\begin{equation}
  \tilde{\sigma}_{0|X}
  = \frac{\langle{+}|_A\,\rho_{AC}\,|{+}\rangle_A}{p(0|X)}
  = P(\varphi)|{+}\rangle\langle{+}|P(\varphi)^\dagger,
\end{equation}
so $C(\tilde{\sigma}_{0|X}) = C(\varphi)$ by Lemma~\ref{lem:C_formula}.

For $x = Y$, Alice's projector $|{+i}\rangle\langle{+i}|$
(where $|{+i}\rangle = (|0\rangle+i|1\rangle)/\sqrt{2}$) gives:
\begin{equation}
  \tilde{\sigma}_{0|Y}
  \propto |0\rangle_C + ie^{i\varphi}|1\rangle_C
  = |0\rangle_C + e^{i(\varphi+\pi/2)}|1\rangle_C,
\end{equation}
which has Bloch vector $(-\sin\varphi, \cos\varphi, 0)$.
By Clifford-invariance of $C$ (specifically, $C$ is invariant under
$HS^\dagger$, which rotates $\varphi\mapsto\varphi+\pi/2$),
$C(\tilde{\sigma}_{0|Y}) = C(\varphi)$.

\emph{Note}: for $x=Z$ (computational basis), Alice's projection gives
$\tilde{\sigma}_{0|Z} = |0\rangle\langle 0|_C$, a stabilizer state with $C=0$.
The functional therefore excludes the $Z$ setting, retaining only
$\{X,Y\}$ where magic is preserved under projection.

Since $\tr(H^*\rho) = C(\rho) + F_{\mathrm{LHS}}$ for all $\rho$,
$\mathcal{F} = \frac{1}{2}[(C(\varphi)+F_{\mathrm{LHS}})+(C(\varphi)+F_{\mathrm{LHS}})]
= C(\varphi) + F_{\mathrm{LHS}}$.

\textit{(ii)} Under a stabilizer LHS model,
$\tilde{\sigma}_{a|x} = \sum_\lambda p(\lambda|a,x)\rho_\lambda$
with $C(\rho_\lambda)=0$, so $\tr(H^*\rho_\lambda) \leq F_{\mathrm{LHS}}$.
By linearity, $\tr(H^*\tilde{\sigma}_{a|x}) \leq F_{\mathrm{LHS}}$
and therefore $\mathcal{F} \leq F_{\mathrm{LHS}}$.

\textit{(iii)} The violation $C(\varphi) > 0$ rules out any stabilizer LHS model.
Two measurement settings $\{X,Y\}$ are sufficient to demonstrate steering
of a pure entangled state~\cite{WisemanJonesDoherty2007}.
Since Charlie's device is trusted, measuring $H^*$ certifies
$C(\tilde{\sigma}_{0|x}) > 0$ without trusting Alice's or Bob's devices.
\end{proof}

\begin{remark}[Scope of 1SDI]
The violation $\mathcal{F} - F_{\mathrm{LHS}} = C(\varphi)$ exactly,
so the magic content is not merely certified as positive but
\emph{measured} from the steering correlations without Alice revealing $\varphi$.
\end{remark}

\section{IBM Quantum Hardware Experiment}
\label{sec:hardware}

\subsection{Setup}

The $(2,3)$ MSS protocol is implemented on \texttt{ibm\_marrakesh},
a 156-qubit IBM Heron r2 processor (\texttt{measure\_all()};
4096 shots per circuit).
The three roles are pinned to physical qubits $(q_2, q_1, q_3)$ for
(Alice, Bob, Charlie): the protocol requires Alice to be coupled to
both other parties, and this triple satisfies that directly, so the
transpiled circuits contain no routing (verified: zero \textsc{swap}
gates, two \textsc{cz} gates, depth $\leq 19$).
At the time of the run these qubits had
$T_1 = 293.5, 200.8, 278.8\,\mu$s and $T_2 = 297.4, 151.6, 197.9\,\mu$s
respectively.
State tomography uses $X/Y/Z$ measurements on Charlie (Bob).
Post-selection on Alice's $\ket{+}$ outcome and software $Z$-correction
on Charlie are applied classically in the Qiskit LSb-0 convention.
$C(\rho)$ is computed via linear programming against 60
single-qubit stabilizer Wigner vectors.\footnote{IBM Quantum job ID (\texttt{ibm\_marrakesh}, standard protocol): \texttt{d9o1f3oqs0bc73e3vo1g}.}

\subsection{Standard protocol}

\begin{table*}[htbp]
\centering
\caption{Standard MSS on \texttt{ibm\_marrakesh} (4096 shots, qubits $(2,1,3)$).
$C_\mathrm{th} = C(\varphi)$ from Lemma~\ref{lem:C_formula}.
$\sigma_C$ and $\sigma_F$ are $1\sigma$ parametric bootstrap uncertainties
($N_\mathrm{boot}=2000$, $N_\mathrm{eff}\approx 2030$ post-selected shots).
$C(\rho_B) < 10^{-6}$ in all cases: the LP solver tolerance is the binding
constraint --- shot-noise-induced Bloch perturbations ($\sigma \approx 0.022$
per Pauli estimator, where $\sigma = 1/\sqrt{N_\mathrm{eff}}$) are
${\sim}15\times$ too small to push the reconstructed Bob state outside
the stabilizer polytope (polytope boundary requires a Bloch component
sum exceeding 1; maximum $3\sigma$ perturbation gives ${\sim}0.066$).}
\label{tab:hw}
\begin{tabular}{ccccccc}
\toprule
$\varphi$ & $C_\mathrm{th}$ & $C(\rho_C)$ & $\sigma_C$ & Fidelity & $\sigma_F$ & $C(\rho_B)$ \\
\midrule
$\pi/8$   & 0.153 & 0.128 & 0.016 & 0.973 & 0.006 & $<10^{-6}$ \\
$\pi/4$   & 0.207 & 0.194 & 0.016 & 0.959 & 0.008 & $<10^{-6}$ \\
$\pi/3$   & 0.183 & 0.159 & 0.016 & 0.965 & 0.008 & $<10^{-6}$ \\
$3\pi/4$  & 0.207 & 0.188 & 0.016 & 0.959 & 0.009 & $<10^{-6}$ \\
\bottomrule
\end{tabular}
\end{table*}

Table~\ref{tab:hw} summarises the measured magic content, fidelity,
and bootstrap uncertainties for all four test phases; Fig.~\ref{fig:standard}
displays the same data against the analytic curve of Lemma~\ref{lem:C_formula}.
Post-selection on Alice's $\ket{+}$ outcome retains approximately half
the shots, giving $N_{\rm eff}\approx 2030$ effective samples per
tomographic setting. Uncertainties are obtained by parametric bootstrap:
Pauli expectation values are resampled from the binomial shot
distribution, the Bloch vector is reconstructed and clipped to the
physical ball, and $C$ is recomputed by linear programming for each of
$N_{\rm boot}=2000$ replicas.

$C(\rho_B) < 10^{-6}$ in all cases, consistent with the noise-robust
prediction of the $I/2$ invariance argument. This bound is set by the
solver tolerance rather than by statistics: the stabilizer polytope
boundary requires a Bloch component sum exceeding unity, while
shot noise displaces the reconstructed Bob state by at most
$\sim\!0.066$ at $3\sigma$, so no plausible fluctuation can carry
$\rho_B$ outside the polytope.

Faithfulness holds for all four $\varphi$ values at $\geq 8\sigma$,
with every measured $C(\rho_C)$ within $0.025$ of the analytic value
of Lemma~\ref{lem:C_formula}.
Depolarising noise alone accounts for the residual deficit. Writing
$\rho_C = \eta\,\proj{\varphi} + (1-\eta)I/2$ with $\eta = 2F-1$ gives
$C_{\rm dep} = \eta\,C(\varphi)$, and all four measured values sit
within $1.1\sigma$ of this prediction, the largest discrepancy being at
$\varphi = \pi/8$. We see no sign of a $\varphi$-dependent systematic
on top of it. Tomographic reconstruction points the same way: the delivered
states show no significant phase-calibration offset (mean
$-1.4^\circ$, spread $1.5^\circ$). Such an offset would be a coherent
rather than a stochastic error, and because $C(\varphi)$ is sharply
peaked it would cost disproportionately more magic content than
fidelity, so it is worth checking for separately.
State fidelity $0.959$--$0.973$ exceeds the 15-to-1
magic state distillation threshold (0.856) with substantial margin,
confirming that the delivered states are of distillable quality.
We had earlier run the same protocol on a different qubit assignment.
Those results reproduce the ones reported here: $C(\rho_B)$ fell below
solver tolerance in every case, and the reconstructed $C(\rho_C)$
values agree with Table~\ref{tab:hw} to within $1.6\sigma$. The
security guarantee thus holds in all eight runs across the two
assignments.

\begin{figure*}[htbp]
\centering
\includegraphics[width=\textwidth]{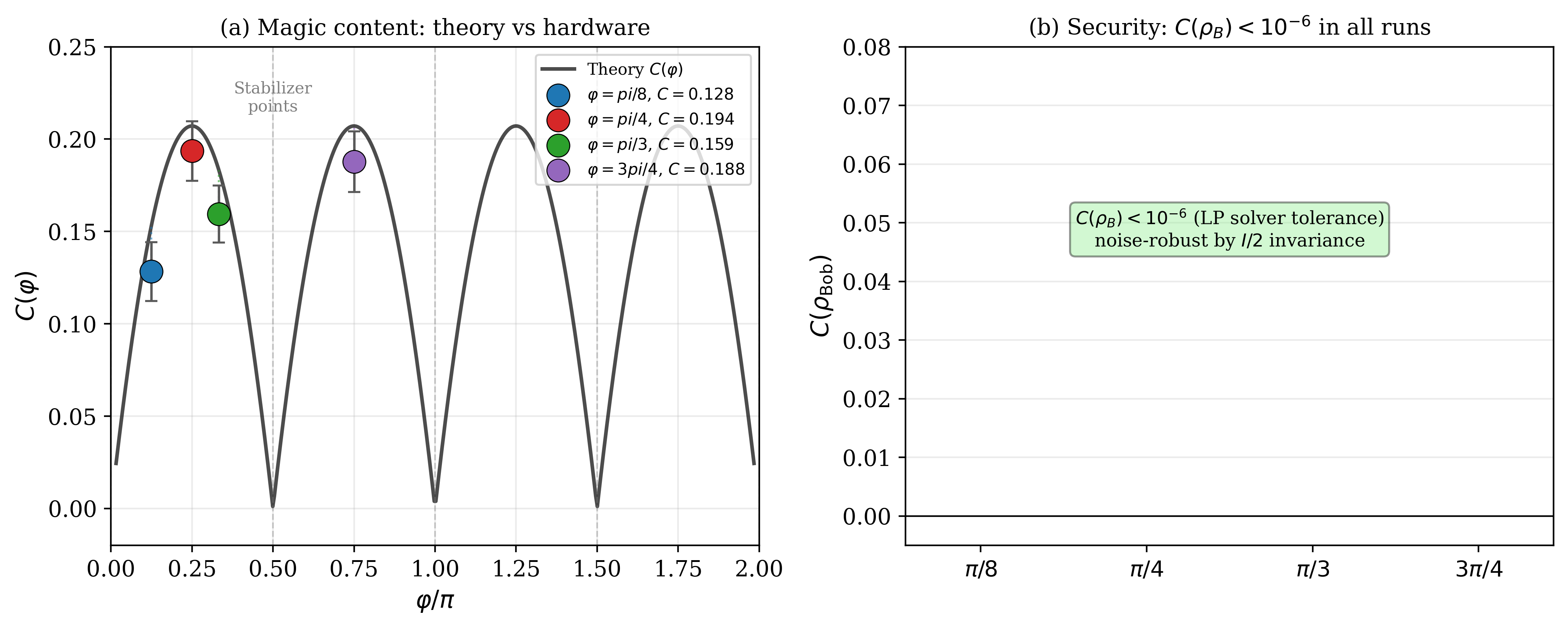}
\caption{Standard MSS on \texttt{ibm\_marrakesh} (4096 shots, qubits $(2,1,3)$).
\textbf{(a)} Measured $C(\rho_C)$ (scatter, $1\sigma$ bootstrap error bars)
against the theory curve $C(\varphi) = (|\sin\varphi|+|\cos\varphi|-1)/2$ (line).
All four $\varphi$ values are faithful to within $0.025$.
\textbf{(b)} $C(\rho_B) < 10^{-6}$ in every case ---
noise-robust security from the $I/2$ invariance argument.}
\label{fig:standard}
\end{figure*}

\section{Discussion}
\label{sec:discussion}

\subsection{MSS vs standard QSS}

Standard QSS~\cite{Hillery1999} prevents an unauthorised party
from learning the identity of the secret state.
MSS prevents them from achieving non-Clifford computational universality with it.
An unauthorised party holds $I/2$; no local operation can change that.
In multi-server BQC the relevant threat is unilateral computational
universality, not information leakage; MSS is designed for this setting.

\subsection{Note on C as a magic measure}

$C$ is invariant under single-qubit Cliffords.
For multi-qubit states, $C$ can increase under certain free operations
(e.g., $H\otimes I$)~\cite{Raussendorf2017},
reflecting a cohomological obstruction for qubit phase space.
This does not affect the present results: all critical quantities
($C(\rho_B)$, $C(\rho_C)$, $C(\varphi)$) are single-qubit values.
The security proof uses only $C(I/2) = 0$ and the $I/2$ fixed-point
argument, neither of which requires $C$ to be a monotone
under the full Clifford group~\cite{Dutta2026}.

\subsection{Open problems}

The $(k,n)$ generalisation with $k < n-1$ faces two distinct
obstacles. Structurally, the GHZ peeling construction is inherently
sequential, since each $X$-measurement removes exactly one party, so no
coalition of fewer than $n-1$ parties can collectively deliver
$P(\varphi)\ket{+}$ within this construction; entanglement beyond GHZ
(graph states are natural candidates) is required.
Fundamentally, since the recipient obtains a pure quantum state, the
quantum secret sharing bound of Cleve, Gottesman, and
Lo~\cite{Cleve1999} applies: $k > n/2$ is necessary, ruling out
$k \leq n/2$ regardless of the entanglement resource employed.
Proposition~\ref{prop:1SDI} opens a well-defined extension to
a fully 1SDI protocol, where magic delivery is certified from
steering inequality violations with untrusted coalition devices.
The timescale on which delivered magic survives idling, characterised
in the companion paper~\cite{DuttaTushar2026b}, could be mapped across
platforms to enable hardware-adapted protocols.
Whether the two-resource cooperation identified here is a general
mechanism for noise-resilient quantum resources is open,
with natural connections to magic-preserving quantum error correction
via the Heisenberg picture of $C$~\cite{Dutta2026}.

\section*{Acknowledgments}

The authors thank Dr.\ Chandan Datta (IISER Kolkata) for valuable
insights on the application part, Dr.\ Albert Rico (University of
Siegen, Germany) for insightful discussions on threshold structures for
quantum resources and for observations that sharpened the coalition
security analysis, and Cameron Foreman (Quantinuum) for careful reading
and questions that improved the presentation of the adversary model.
IBM Quantum access was provided through the Open Plan.

\end{document}